\newtheorem{definition}{Definition}
\newtheorem{theorem}{Theorem}
\newtheorem{proposition}{Proposition}
\newtheorem{example}{Example}
\def\Ag{A}
\def\PR{\mathit{PR}}  
\def\TR{\mathit{TR}}  
\title{Analysing Multi-Agent Systems using 1-safe Petri Nets}
\begin{document}

\author[1]{Federica	Adobbati}

\affil[1]{DISCo Università degli Studi di Milano-Bicocca}

\author[2,3]{{\L}ukasz Mikulski}

\affil[2]{Faculty of Mathematics and Computer Science, Nicolaus Copernicus University}
\affil[3]{Institute of Computer Science, Polish Academy of Sciences}

\date{}
\setcounter{Maxaffil}{0}
\renewcommand\Affilfont{\itshape\small}

\maketitle

\begin{abstract}
In the modelling and analysis of large, real systems, the main problem in their efficient processing is the size of the global model.
One of the popular approaches that address this issue is the decomposition of such global model into much smaller submodels and interaction between them.
In this paper we discuss the translation of multi-agent systems with the common-action-based synchronization to 1-safe Petri nets.
We prove that the composition in terms of transition systems is equivalent to the transition-based fusion of nets modelling different agents. 
We also address the issue of permanent disabling of some parts of the system by constraints implied by the synchronization and discuss the methods of solving it without the computation of the entire global model.
\end{abstract}

\section{Introduction}

A multi-agent system (MAS)
is composed of multiple decision-making agents interacting inside some environment.
They are highly distributed and concurrent systems.
Each agent performs actions aimed at reaching one's goal,
making decisions according to own strategy.
Thus, when the goal is common or beneficial for many agents,
they cooperate to achieve it.
On the other hand, they compete when their goals are in conflict.
Multi-agent systems originate from distributed artificial intelligence field \cite{ferber1999multi},
and are also studied by the formal methods and model-checking community.

The main challenge related to the analysis of real-live MAS is the complexity.
On one hand, the global model of such system is usually very large.
A simple voting model with four voters and one coercer discussed in \cite{STV+AGR} consists
of almost 300 000 states and 7 000 000 transitions.
On the other hand, the price for expressiveness, which allows to express properties
like Stackelberg or Nash equilibria,
is high computational complexity of model checking:
from 2EXPTIME-complete in the case of ATL*, to non-elementary for SL \cite{belardinelli2019strategy}.
It is worth  to note that ATL, being less expressive, allows to
model check strategic properties in polynomial time.

There are many methods that address the issue of both memory and time complexity for the analysis of MAS.
The methods reducing state space range from partial order reductions~\cite{JPSDM20}, where the analysed model is much smaller, 
through on-the-fly techniques~\cite{raimondi2005model}, hoping that the small part of the model is sufficient to verify the considered formula, ending with decomposition methods and assume-guarantee
techniques, where the local specification is checked on a small part of the system 
immersed in the abstraction of the global model~\cite{lomuscio2013assume}.
In order to speed up the computation, one can try to reason about the base of
approximations related with less expressive but more desirable models~\cite{JKKM19}, 
or reduce the space of checked strategies by their internal structure~\cite{Domino19}.
The research reported in this paper fits in the decomposition methods approach.

\paragraph{Related work}
Synthesis and  analysis of multi-agent systems is a well known topic in the literature, 
also in the context of Petri nets.
Pujari and Mukhopadhyay  in \cite{pujari2012petri} discuss
MAS as a discrete-event dynamic system (DEDS), and use Petri nets as a modeling tool to assess the structural properties of the system.  
Similarily, following DEDS concept, Lukomski and Wilkosz \cite{lukomski2010modeling}
show rules of modeling and analyzing the considered multi-agent
system with use of Petri nets.
Everdij et.al. \cite{everdij2006compositional} propose to use compositional specification power
of Petri nets in application to a multi-agent system.
Highly distributed air traffic operations system is considered
 and modelled using Stochastically and Dynamically Coloured Petri Nets.
The approach of Galan and Baker \cite{jafmas99} focuses on 
specifying and analyzing the conversations in a multi-agent
system. Conversations are specified using an automata
model and converted into a Petri net representation. Using
a Petri net analyzer, the conversations are checked for
consistency and coherency by testing liveness and safety of the resulting net.
Hiraishi in \cite{hiraishi2001petri} proposes $PN^2$ model, 
an extension of P/T nets, for the design and the analysis of multi-agent systems.

Another branch of the literature study shows a number of approaches using Petri nets to coordinate, organize or plan MAS behaviors.
In \cite{ziparo2011petri} a representation and execution framework for high level multirobot plan design, called Petri Net Plans (PNP), is proposed. 
PNP is based on Petri nets with a domain specific interpretation.
Places and transitions are partitioned into several classes of different interpretations.
A special case is a Petri net that has 
at most one token per place and edges of weight one.
As a central feature, PNPs allow for a formal analysis
of plans based on standard Petri net tools.
Scheduling by hierarchical structuring of the tasks performed by agents
is one of key ideas  in \cite{MOLINERO2011}.
They propose a multi-agent system that allows the user to
define a hierarchical structuring of the tasks that these agents perform, to plan a schedule involving parallel and sequential calling of the agents.
The agents are atomic or complex. 
Atomic agents are simple Petri nets performing a task, while
complex agents  are used to gather atomic (and/or other
complex agents) to conglomerate their individual
behaviour, and arrange their working order.
The authors of \cite{kouah2013synchronized} propose a framework for specifying multi-agent systems based on
Synchronized Petri Nets. 
It is an extension of Recursive Petri nets, 
 facilitating multi-agent system specifications
by concepts like: typed places,
transitions and tokens, synchronization points,
synchronization conditions, synchronization relations and
binding functions.

Finally, an approach that seems to be very close to ours uses Nested Unit Petri Nets (NUPN)\cite{garavel2019nested}.
One can see multi-agent systems as safe NUPNs of height 1 (taking every agent as a leaf unit and the whole system as
the root unit). The nets used in the examples and constructed with the use of a naive solution (taking a single place for every state of considered transition system) are even unit-safe. The considered problem of transitions disabled by synchronizations corresponds to the usual weak-liveness check for such nets. 

\paragraph{Contribution}
The main contribution of this paper is the proof of the operational equivalence between multi-agent systems and labelled 1-safe Petri nets. First we discuss how one can synthesize a Petri net of desired behaviour (see~\cite{BBD15} for comprehensive description) and utilize it to prepare a model for each agent separately. Following the idea of splitting transitions in the case of behaviour which cannot be covered by a 1-safe Petri net utilized in Petrify~\cite{cortadella1997petrify}, we first provide a naive solution. Then we prove that the composition of the set of agents defined in~\cite{JPSDM20} is equivalent to classical Petri net transition fusion~\cite{gomes2005structuring}.

As already noticed in~\cite{jamroga2021strategic}, asynchronous multi-agent systems with the composition based on the synchronization on actions are prone to all sorts of unwanted side effects. Some actions designed in the model of composed agents may be permanently disabled in the global model of the system. Such an artifact can surely be seen as an unwanted side effect of building the multi-agent system by the composition. As our second contribution, we address this issue and provide a procedure to check whether a particular action is permanently disabled without computing the entire global model of a considered multi-agent system. Although the proposed procedure does not reduce the complexity of the considered problem, it allows to perform calculations on a fragment of the considered model (which might be much smaller then the global model).

\section{The model}

\subsection{Asynchronous multi-agent systems}
In this section we recall an \emph{asynchronous multi-agent system} defined in~\cite{JPSDM20}.

\begin{definition}
An asynchronous multi-agent system (AMAS) consists of n agents $\Ag =\{1,\dots,n \}$.
Each agent is associated with a tuple
 
$A_i=(L_i,\iota_i,Evt_i,\PR_i,\TR_i,\mathcal{PV}_i,V_i)$, where
\begin{itemize}
    \item $L_i=\{l_i^1,\ldots,l_i^{n_i}\}$ is a set of local states;
    \item $\iota_i\in L_i$ is an initial state;
    \item $Evt_i=\{\alpha_i^1,\ldots,\alpha_i^{m_i}\}$ is a set of events in which agent $A_i$ can choose to participate;
    \item $\PR_i:L_i\to 2^{Evt_i}$ is a local protocol, which assigns 
    events to states in which they are available;
    \item $\TR_i:L_i\times Evt_i\to L_i$ is a local transition function, such that $\TR_i(l_i,\alpha)$ is 
    defined whenever $\alpha\in \PR_i(l_i)$;
    \item $\mathcal{PV}_i$ is a set of local propositions;
    \item $V_i:L_i\to 2^{\mathcal{PV}_i}$ is the valuation of local propositions in local states.
\end{itemize}
\end{definition} 

Since the model checking of AMAS is not in the scope of this paper, we are interested only in
transition systems which define the behaviour of AMAS, namely the tuples $(L_i,Evt_i,\TR_i,\iota_i)$.
Note, however, that local events of different agents may not be disjoint. 
Events which are present in more than one event set $Evt$ require participation of more than one 
agent, namely those agents synchronize on such events. 

\begin{example}
Fig. \ref{fig:tramas} represents an AMAS with three agents. The events $n1, m1,$ $n2, m2$ are shared
by two agents,
and 
require the participation of 
both
of them to occur,
whereas the events $n3$ and $m3$ are local, and depends on a single agent. 
\end{example}
\begin{figure}
    \centering
    \includegraphics[width=0.5\textwidth]{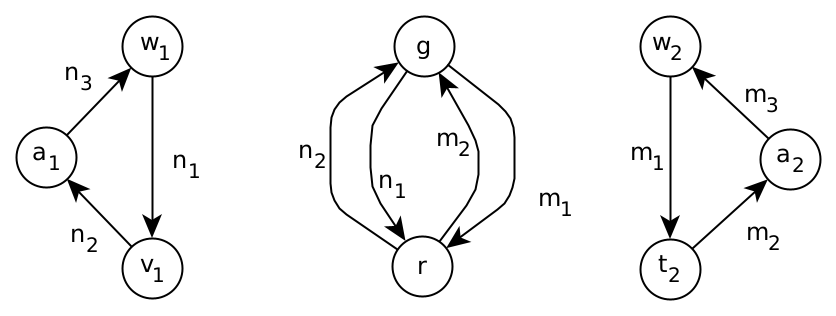}
    \caption{Example of three local models for agents related to Train-Gate-Controller benchmark from \cite{JPSDM20}.}
    \label{fig:tramas}
\end{figure}

\subsection{1-safe Petri nets}
Petri nets were introduced by Carl Adam Petri in his PhD thesis \cite{Petri62} as a 
formal graphical model to represent and analyse concurrent systems. 
In this section we provide some basic definitions that will be useful in the rest of the 
paper, for an extensive overview about Petri nets and their applications we refer to \cite{murata89,peterson77}. 

A \emph{plain} net is characterized by a set of \emph{places} or \emph{conditions} $P$, represented as circles, by a set of \emph{transitions} $T$, represented as squares, 
and by a  \emph{flow relation} between them $F\subseteq (P\times T) \cup (T \times P)$, represented with arcs\footnote{In more general definition one can define a weight function $W:(P\times T) \cup (T \times P)\to\mathbb{N}$ instead of flow relation $F$
.}. 

For each element $x\in P\cup T$, its preset is $^\bullet x = \{y\in P\cup T : (y, x)\in F\}$, and its postset is $x^\bullet = \{y\in P\cup T : (x, y)\in F\}$. 
For each transition $t\in T$, we assume that its preset and its postset are non-empty, 
i.e. $^\bullet t \neq \emptyset$ and $t^\bullet \neq \emptyset$. 
The elements in $^\bullet t$ are also called \emph{preconditions} of $t$, and the elements in $t^\bullet$ 
are also called \emph{postconditions}.

A net system is a quadruple $\Sigma = (P, T, F, m_0)$, where $P, T, F$ are the elements of the net, 
and $m_0: P \rightarrow \mathbb{N}$ is the \emph{initial marking}. 
A transition $t\in T$ is enabled in a marking $m$ if for each $p\in {^\bullet t}$, $m(p) \geq 0$. 
If a transition is enabled, it can \emph{occur} or \emph{fire}, and its occurrence generates a 
new marking $m'$ defined as follows. 
\[
    m'(p) =
  \begin{cases}
    m(p)-1 & \text{for all \(p \in {^\bullet t} \setminus t^\bullet\)}\\
    m(p)+1 & \text{for all \(p \in t^\bullet \setminus {^\bullet t}\)}\\
    m(p)   & \text{in all other cases.}
  \end{cases}
\]
In symbols, $m[t\rangle$ denotes that $t$ is enabled in $m$, while $m[t\rangle m'$ denotes 
that $m'$ is the marking produced from the occurrence of $t$ in $m$. 
A marking $m$ is reachable in a net system ${\Sigma = (P, T, F, m_0)}$ if 
there is a sequence of transitions (called a \emph{firing sequence}) 
$t_1...t_n$ such that $m_0[t_1\rangle m_1...m_{n-1}[t_n\rangle m$. 
The set of all the reachable markings is denoted with $[m_0\rangle$. 
A transition $t\in T$ is 1-live if there is a marking $m\in [m_0\rangle$ such that $m[t\rangle$. 

Let $m$ be a reachable marking, and $t_1, t_2\in T$ be two transitions enabled in $m$: 
$t_1$ and $t_2$ are in \emph{conflict} in $m$ if $^\bullet t_1 \cap {^\bullet t_2} \neq \emptyset$; 
$t_1$ and $t_2$ are concurrent in $m$ if $^\bullet t_1 \cap {^\bullet t_2} = \emptyset$ and $t_1^\bullet \cap t_2 ^\bullet = \emptyset$. 

In this paper we work with the class of 1-safe net systems. 
A net system is \emph{1-safe} if, for each $m\in [m_0\rangle$ and for each $p\in P$, $m(p) \leq 1$. 
In a 1-safe system, each marking can (and will) be considered as a set of places, and each place can 
be interpreted as a proposition, that is true if the place belongs to the marking, and false otherwise. 

A \emph{labelled} Petri net $\Sigma_\lambda = (\Sigma, \lambda)$ is a 1-safe net system with a function 
$\lambda: T \rightarrow \Lambda$, where $\Lambda$ is a set of labels. Abusing the notation, for each 
$T' \subseteq T$ subset of $T$, we will denote with $\lambda(T') = \{\alpha\in \Lambda : \exists t\in T': \lambda(t) = \alpha\}$ 
the set of labels of the elements in $T'$. The set $\lambda(T)$ is the \emph{alphabet} of $\Sigma_\lambda$. 

The sequential behaviour of a labelled Petri net 
can be described by an initialized labelled transition system, 
where each state corresponds to a reachable marking, 
and each arc is labelled by the label of the transition leading from the source marking to the target one.

\begin{definition} 
	Let $\Sigma = (P, T, F, m_0, \lambda)$ be a  labelled Petri net, 
	its \emph{marking graph} is a quadruple $MG(\Sigma) = ([m_0\rangle, \lambda(T), Ar, m_0)$
	where  
	
	$Ar= \{ (m, \lambda(t), m')\ | \ m, m' \in [m_0\rangle, \  t \in T,\,  m[t \rangle m' \}$. 
\end{definition}

\subsection{Synthesis of 1-safe Petri nets from AMAS}
For each agent in an AMAS, we can always obtain a 1-safe labelled Petri net: 
the agent in the AMAS can be considered as the marking graph of the net, 
therefore the net can be found through a synthesis procedure. 
The classical techniques for the synthesis of 1-safe net systems are based on 
the research of \emph{regions} \cite{BBD15}. 
Let $A_i = (L_i, Evt_i, \TR_i, \iota_i)$ be a labelled transition system. 
A \emph{region} is a subset of states $r\subseteq L_i$ such that, for each 
$e\in Evt_i$, 
one of the following conditions holds: 
(1) $e$ enters the region, i.e. for each arc labelled with $e$ from $s_1$ to $s_2$, 
$s_1 \not \in r \wedge s_2 \in r$; (2) $e$ leaves the region, i.e. for each arc labelled 
with $e$ from $s_1$ to $s_2$, $s_1 \in r \wedge s_2 \not\in r$; (3) $e$ does not cross the 
border of the region, i.e. for each arc labelled with $e$ from $s_1$ to $s_2$, 
$s_1 \in r \wedge s_2 \in r$, or $s_1 \not\in r \wedge s_2 \not\in r$. 
In order to synthesize a 1-safe net system, every label in $Evt_i$ is translated into a transition of the net system, 
and every region into a place. The flow relation is determined as follows: for each 
transition $e$, for each place $r$, $r$ is a precondition of $e$ if $e$ leaves $r$ in 
$A_i$, $r$ is a post-condition of $e$ if $e$ enters $r$ in $A_i$. 
If $e$ does not cross the border of $r$, and for each arc labelled with $e$ from 
$s_1$ to $s_2$, we have $s_1, s_2\in r$, then we can see $r$ as both a precondition and a 
postcondition of $e$, and add a self loop to the net. 
Otherwise, there is no flow relation between $r$ and $e$ in the net. 

Not every transition system can be synthesized into a 1-safe net. 
In particular, we can synthesize a 1-safe net system from a transition system if, 
and only if, the set of regions of the transition system satisfies the so 
called \emph{state separation property} (SSP) and \emph{event-state separation property} (ESSP): 
\begin{align*}
    \forall s, s'\in L_i \quad s\neq s'  \quad \rightarrow \exists r\in R : (s\in r \wedge s' \not\in r) \lor (s\not\in r \wedge s\in r)\quad (SSP)\\
    \forall e \in E, \, \forall s\in S : e \text{ is not outgoing from } s, \rightarrow \exists r : s\not\in r 
    \wedge e \text{ leaves } r \quad(ESSP)
\end{align*}
\begin{example}
Consider the central transition system on the upper part of Fig.~\ref{fig:glnet}.  
Since there is an outgoing transition labelled as $3$ from all the states, the only 
regions in this system are trivial: the empty set, and the set of all the states. 
These regions do not allow to separate any pair of states, therefore SSP is not satisfied, and the transition system is not synthesizable with a 1-safe net without labels. 
\end{example}
However, it is always possible to obtain a labelled 1-safe system, by allowing the net to  
have more transitions with the same label. 
In this case, we can split the transitions of $A_i$ with the same label into subgroups, and 
look for regions as if each group had a different label. 
This generates a set of different transitions in the net sharing the same label. 
To obtain such a net is always possible, since we could consider subsets formed by 
single arcs: if each arc of the transition system is considered as if it had different 
labels from the others, it is easy to verify that the set of regions satisfies the 
separation properties SSP and ESSP. 
Furthermore, each state of $A_i$ is a region and therefore 
can be translated into a place of the synthesized net. 
As showed in \cite{BBD15}, the minimal 
regions with respect to inclusion are sufficient for the synthesis, therefore  we can consider the states of $A_i$ as all and only the places of the net. 
In what follows we will transform each agent of the AMAS in this way. 

For each $i$,  
we can represent each agent as a labelled Petri net. 
In particular, for each agent 
$A_i = (L_i,Evt_i,\TR_i,\iota_i)$, 
the associated Petri net is 
defined as $\Sigma_i = (L_i, T_i, F_i, \iota_i, \lambda)$, where:
\begin{itemize}
    \item $L_i$ is the set of places that coincides with the set of local states in $A_i$; 
        \item 
$T_i$ is the set of transitions, and there is one for each element in $\TR_i$; 
    \item 
$F_i$ is the flow relation, fully determined by $\TR_i$; 
\item $\iota_i$ is the initial 
marking, that coincides with the initial state of $A_i$ ; 
    \item  
$\lambda: T_i \rightarrow Evt_i$ is the  labelling function, associating 
every transition of the net with the label of the corresponding arc on the agent. 
\end{itemize}
\begin{example} 
Consider the AMAS in Figure~\ref{fig:tramas}, described also in \cite{JPSDM20}.
In Figure~\ref{fig:tr2pn} the agents of the AMAS are represented as Petri nets. 
\begin{figure}
    \centering
    \includegraphics[width=0.7\textwidth]{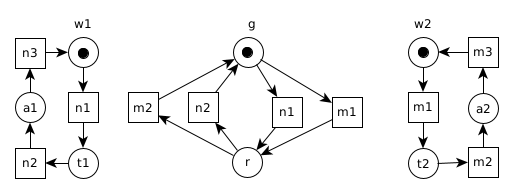}
    \caption{1-safe Petri nets for transition systems depicted in Fig.~\ref{fig:tramas}.}
    \label{fig:tr2pn}
\end{figure}

\end{example}

\section{Composition}

\subsection{Composition of AMAS}
In this section we recall from~\cite{JPSDM20} the definition of \emph{canonical interleaved interpreted system},
which is a composition of agents being parts of an asynchronous multi-agent system 
with synchronizations on common events.
Note that since we are not interested in model checking, we would concentrate only on the behaviour
of the multi-agent system (putting apart the propositional variables). 
\begin{definition}
A canonical interleaved interpreted system (canonical IIS) is an AMAS extended with a tuple $(St,Evt,\TR,\iota)$ where:
\begin{itemize}
    \item $St\subseteq L_1\times\ldots\times L_n$ is a set of global states;
    \item $Evt = \bigcup_{i\in\{1,\ldots,n\}} Evt_i$ is a set of events;
    \item $\TR:St\times Evt\to St$ is a (partial) global transition function, where
    $\TR((l_1,\ldots,l_n),\alpha)=(l'_1,\ldots,l'_n)$ if \ 
    $\TR_i(l_i,\alpha) = l'_i$ for all $i$ where $\alpha\in Evt_i$ and
    $\TR_i(l_i,\alpha) = l_i$ otherwise;
    \item $\iota=(\iota_1,\ldots,\iota_n)$ is an initial state.
\end{itemize} 
\end{definition}
Given a canonical IIS $I$, some of its states may not be reachable through any execution, 
due to the restrictions given by the synchronizations, 
and therefore also the transitions outgoing from these states can never be executed. 
We will denote with $I_r$ the canonical system where these unreachable states and transitions have been pruned. 

By definition, the number of states in the IIS grows exponentially with the number of agents, 
therefore limiting the number of compositions when studying the properties of the system may help in the analysis. 

\subsection{Composition of 1-safe Petri nets}
%
Let $\Sigma_1 = (L_1, T_1, F_1, \iota_1, \lambda), ..., \Sigma_n = (L_n, T_n, F_n, \iota_n, \lambda)$ be the set of Petri net agents. 
We then construct a \emph{global} net $\Sigma = (P, T, F, m_0, \lambda)$, showing the 
interaction of the agents. 
The set of places $P$ of $\Sigma$ is the union of the sets of places $L_i$. 
For each label $\alpha \in \lambda(T_i)$, for each agent $\Sigma_i$, 
let $T_i^\alpha = \{t\in T_i : \lambda(t) = \alpha\}$ be the set of transitions 
labelled with $\alpha$. The set of transition of $\Sigma$ is defined as 
$T = \bigcup_{\alpha\in \lambda(T_i)}\bigotimes_{i \in \{1,...,n\}}T_i^\alpha$. 
The flow relation is determined in this way: for each transition 
$t \in T$, and each place $p\in P$ there is an arc from $p$ to $t$ iff 
there is a $\Sigma_i$ and $t_j\in T_i$ such that $p\in L_i$, $t_j$ is a component 
in $t$, and $(p, t_j) \in F_i$; analogously for the arcs from $t\in T$ to 
$p\in P$. 
The initial marking $m_0$ is the union of all the elements $\iota_i$, with 
$i\in \{1, ..., n\}$. 
The labelling function $\lambda$ associates every transition $t\in T$ to the 
label of all its component, that is unique by construction. 
We denote the alphabet of $\Sigma$ with $\lambda(T)$.

By construction, each place in $\Sigma$ belongs at most to one agent, 
whereas the transitions can be shared. 
Note that some of the transitions may be enabled in no reachable marking, 
and therefore are not 1-live. 
As we discussed in the previous section, the same problem happens when we consider the composition of AMAS, because some 
states may not be reachable from the initial state $\iota$, due to the synchronization 
constraints. 
The problem of finding these transitions is discussed in detail in Sec.~\ref{sec:1-live}. 
\begin{figure}
    \centering
    \includegraphics[width=0.5\textwidth]{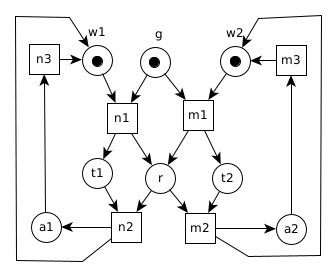}
    \caption{Global Petri net model resulting from the composition of the nets depicted in Fig.~\ref{fig:tr2pn}.}
    \label{fig:tr2glpn}
\end{figure}
\begin{example}
\label{ex:small_parts}
Fig.~\ref{fig:tr2glpn} represents the composition 
of the three agents from Fig. \ref{fig:tr2pn}. 
In this model, for each agent, each label appears 
only in one transitions, therefore in the global 
net in Fig. \ref{fig:tr2glpn} each transition has 
a different label. 

This is in general not the case, as we can see 
in Fig.\ref{fig:glnet}. In the latter case both the 
Petri net agents and the global model have the 
same label shared between more transitions. 
\begin{figure}
    \centering
    \includegraphics[width = 0.85\textwidth]{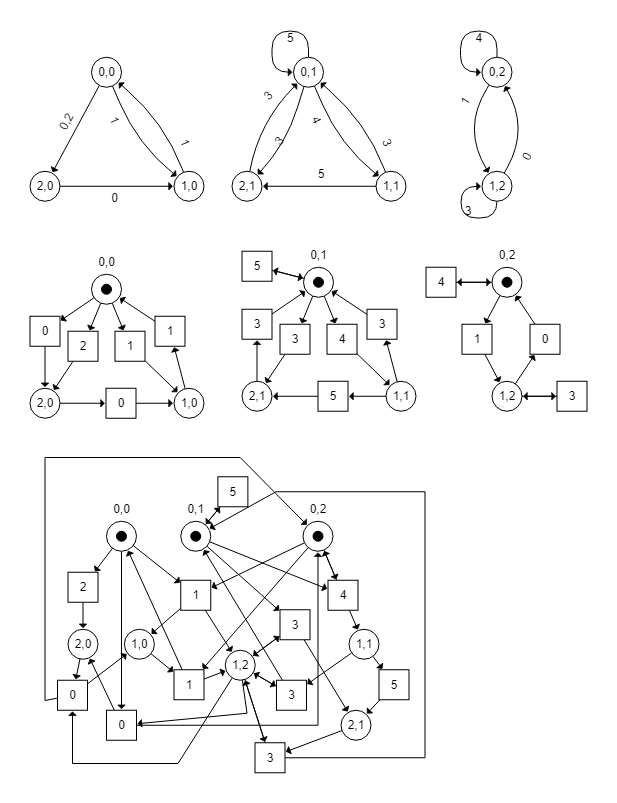}
    \caption{Three agents described as automata, the same agents represented with Petri nets, and the global Petri net.}
    \label{fig:glnet}
\end{figure}
\end{example}
Let $I_r$ be the canonical IIS where all the unreachable states and transitions have been removed. 
The following proposition shows that synthesis and composition are commutative, 
i.e synthesizing Petri net agents from the AMAS and then composing them is equivalent to 
construct the composition of AMAS and then synthesizing a Petri net. 

\begin{proposition}
Let $S$ be an AMAS, $\Sigma = (P, T, F, m_0, \lambda)$ be global Petri net 
constructed as described above, and $I$ the canonical IIS of $S$. 
The transition system of $\Sigma$ is isomorphic to $I_r$.
\end{proposition}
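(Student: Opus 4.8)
The plan is to build an explicit isomorphism $\varphi$ between $MG(\Sigma)$ and $I_r$ that matches each reachable marking of $\Sigma$ with the global state it encodes. As a preliminary observation I would record how each agent net behaves: by the synthesis of the previous subsection, the places of $\Sigma_i$ are exactly the local states $L_i$, every singleton $\{l\}$ with $l\in L_i$ being one of the (minimal) regions used; the initial marking is $\{\iota_i\}$; and an arc $(s_1,\alpha,s_2)$ of $A_i$ gives rise to a transition $t$ of $\Sigma_i$ with ${}^\bullet t=\{s_1\}$ and $t^\bullet=\{s_2\}$ (a self-loop on $\{s_1\}$ when $s_1=s_2$). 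Hence every reachable marking of $\Sigma_i$ is a singleton, firing a transition just moves the single token from the source place of the corresponding arc to its target, and $MG(\Sigma_i)$ is isomorphic to the reachable part of $A_i$.

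Lifting this to $\Sigma$, recall that its places form the disjoint union $\bigcup_i L_i$ and that a transition $t$ of $\Sigma$ with $\lambda(t)=\alpha$ is a tuple supplying one component $t_i\in T_i^\alpha$ for each agent $i$ with $\alpha\in Evt_i$ and nothing for the remaining agents; by the definition of the flow relation, ${}^\bullet t$ and $t^\bullet$ are the unions of the presets, respectively postsets, of its components, all lying in the blocks $L_i$ of the participating agents. Therefore the property ``$m\cap L_i$ is a singleton for every $i$'' is an invariant of $\Sigma$: it holds for $m_0=\{\iota_1,\dots,\iota_n\}$ and is preserved by every firing, which only replaces, inside each participating block, one local state by another. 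This lets me define $\varphi(m)=(l_1,\dots,l_n)$ where $\{l_i\}=m\cap L_i$; clearly $\varphi$ is injective and sends $m_0$ to $\iota$.

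The core step is to verify that $\varphi$ transports arcs faithfully. Fix a reachable marking $m$ with $\varphi(m)=(l_1,\dots,l_n)$ and a label $\alpha$. A transition $t$ of $\Sigma$ with $\lambda(t)=\alpha$ is enabled in $m$ iff, for every participating agent $i$, the component $t_i$ has source place $l_i$; since $\TR_i$ is a partial function there is at most one $\alpha$-labelled arc out of $l_i$, so such a $t$ is uniquely determined when it exists, and it exists iff $\TR_i(l_i,\alpha)$ is defined for all participating $i$, which is exactly the condition for $\TR((l_1,\dots,l_n),\alpha)$ to be defined. When it is, the marking $m'$ with $m[t\rangle m'$ is obtained from $m$ by replacing $l_i$ with $\TR_i(l_i,\alpha)$ for participating $i$ and leaving the other blocks untouched; comparing with the definition of $\TR$, whose ``otherwise'' clause keeps $l_i$ fixed for non-participating agents, this says precisely $\varphi(m')=\TR(\varphi(m),\alpha)$. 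Hence $(m,\alpha,m')$ is an arc of $MG(\Sigma)$ if and only if $(\varphi(m),\alpha,\varphi(m'))$ is an arc of $I$. A routine induction on the length of firing sequences, respectively executions, using this equivalence together with $\varphi(m_0)=\iota$, then shows that $\varphi$ maps $[m_0\rangle$ bijectively onto the set of states reachable in $I$, i.e. onto the state set of $I_r$; combined with the arc equivalence this is the claimed isomorphism.

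I expect the only real difficulty to be bookkeeping at the interface between the ``one transition per arc'' synthesis and the product construction $\bigotimes_i T_i^\alpha$: one must state carefully how agents that do not own $\alpha$ contribute no component (so their tokens stay put, mirroring the ``$\TR_i(l_i,\alpha)=l_i$ otherwise'' clause), and deal with the degenerate self-loop arcs. None of this is conceptually hard, but it has to be set up precisely enough for the invariant argument and the uniqueness of $t$ to go through cleanly.
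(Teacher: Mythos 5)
Your proposal is correct and follows essentially the same route as the paper's proof: identify each reachable marking (one token per agent block $L_i$) with the corresponding global state, check that the initial marking matches $\iota$, and show that the enabledness condition for an $\alpha$-labelled transition of $\Sigma$ coincides with the definedness condition of the global transition function $\TR$. You merely make explicit some steps the paper leaves implicit (the one-token-per-block invariant, the uniqueness of $t$ from functionality of $\TR_i$, and the closing induction on firing sequences), which is a welcome tightening rather than a different argument.
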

\begin{proof}
The initial marking of $\Sigma$ is the initial state of $I$ by the construction. 
Let $m$ be any reachable marking in $\Sigma$. By the construction, $m$ is 
a set of $n$ places, each of them from  a different set $L_i$, with $i\in\{1, ..., n\}$. For each $i\in\{1, ..., n\}$ we denote as $l_i$ the element of $m$ in the set $L_i$.
Let $t\in T$ be a transition such that $\lambda(t) = \alpha$. 
By construction, $t$ is enabled in $m$ iff, for each agent $\Sigma_i$, 
$i\in\{1, ..., n\}$ such that $\alpha \in \lambda(T_i)$, there is a 
transition $t_i\in T_i$ such that $\lambda(t_i) = \alpha$ and $t_i$ enabled 
in $l_i$.   
This condition is equivalent to the one of the global transition function 
defined for $I_r$, therefore for each marking, the set of outgoing transitions and states reachable in one step is the same. Hence the two models are 
isomorphic. 
\end{proof}
Note that, for simplicity, in the above proposition we consider the specific 1-safe labelled systems. 
One can easily repeat similar reasoning for any 1-safe labelled systems which are synthesized
from transition systems describing behaviours of particular agents and their compositions.

\section{1-liveness of transitions}
\label{sec:1-live}
In this section we discuss how to find transitions that are not 1-live 
on the global net. 
This is known to be a PSPACE-complete problem \cite{JLL77,esparza96}. 
We propose an algorithm that, in some cases, does not need to construct the 
global net in order to verify whether a transition is 1-live, but uses a 
smaller subnet. If this is possible, some computation is saved, since the 
complexity of the problem depends on the dimension of the net. 
In the worse case, the algorithm reconstructs the global system, and 
checks 1-liveness on it. 

Consider the net in Fig.~\ref{fig:glnet}. 
All the transitions labelled with 0 will never be enabled. 
A simple way to find these transitions consists in computing the 
marking graph of the net and check its labels. 
The transitions that do not appear in the marking graph will never 
be enabled. 
However, this can be computationally very expensive, since having all 
the agents may increase the level of concurrency, and therefore the 
size of the transition system.

A first alternative idea could be to find some of the transitions that will 
never be enabled by composing for each label all the agents 
sharing it. For example, the labels 1 and 0 are shared by the first 
and third agent in Fig. \ref{fig:glnet}. Fig. \ref{fig:comp13} 
shows the composition of the two agents, and the marking graph of 
this reduced net. The part coloured in red shows the transitions 
that cannot be enabled.  
\begin{figure}
    \centering
    \includegraphics[width = 0.85\textwidth]{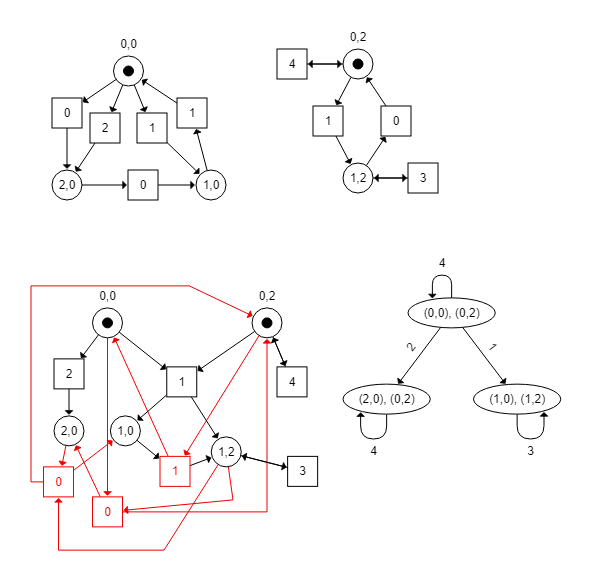}
    \caption{Composition of the first and third transition system depicted 
    in Fig.~\ref{fig:glnet}.}
    \label{fig:comp13}
\end{figure}
If a transition cannot occur in the net obtained composing only 
the agents sharing its label, then it cannot occur in the global net, 
since adding new components can only add the number of constraints, due 
to the synchronization requirements.

Unfortunately, this is only a necessary condition to identify 
transitions that cannot fire, but it is not sufficient. 
To see an example, consider the agents represented in Fig.~\ref{fig:compnot}. 
In the upper part of the figure, we see three agents sharing some of their 
labels. 
Below it, there is the global net, where the unreachable parts are coloured 
in red, and its transition system. At the bottom of the figure, we find 
the composition of the second and third agent, and its transition system. 
The label $d$ is shared only by the second and third agent, and by composing 
only them, it seems to be possible to fire it. 
However, this is not true, as we can see in the global transition system. 
This happens because in the reduced composition for label $d$, $b$ is 
considered as a label of the third agent only, whereas in the global system, 
it must synchronize with the transition of the first agent. 
We can see from the reduced transition system that $d$ must occur after 
$c$ and $b$, but since $b$ cannot fire in the global net, also $d$ cannot 
be reached. 

\begin{figure}
    \centering
    \includegraphics[width = 0.85\textwidth]{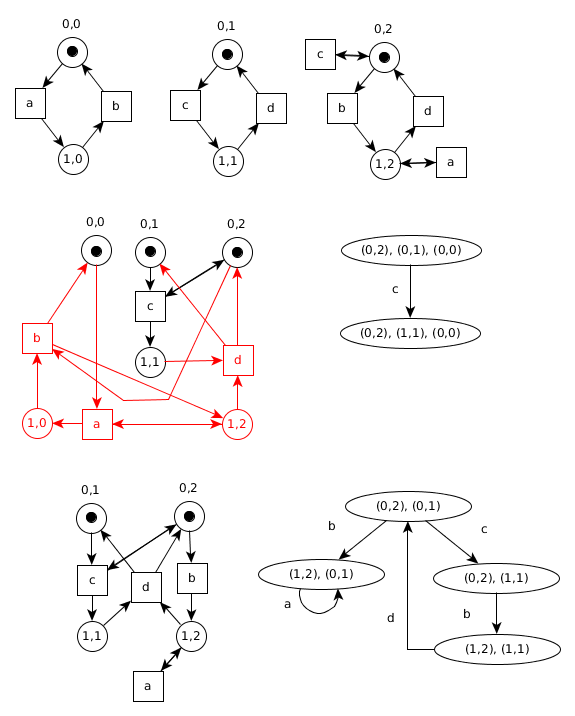}
    \caption{Example of a transition that from the reduced composition 
    falsely seems to be 1-live.}
    \label{fig:compnot}
\end{figure}

This suggests us another element to check whether a transition can be enabled 
without constructing the transition system of the global net, consisting in composing 
all the agents sharing a certain label, and all the labels in a minimal path 
leading to it from the initial state. 
\begin{definition}
Let $MG(\Sigma)$ be the marking graph of $\Sigma$. 
The firing sequence $t_1...t_n$ is \emph{minimal} if for each $i,j < n$, $m_i \neq m_j$, 
where $m_{i-1}[t_i\rangle m_i$.
\end{definition}
Let $\Lambda \subseteq \lambda(T)$ be any subset of labels in the global net $\Sigma$. 
We will denote with $\Sigma^\Lambda$ the net obtained by 
composing all the agents with at least an element of $\Lambda$ in their alphabet. 

Let $\alpha$ be any label on the system, and $T_\alpha =\{t_j\in T : \lambda(t_j) = \alpha\}$.  
Algorithm~\ref{alg:1-live} shows how to check that 
$t_j\in T_\alpha$ is 1-live in $\Sigma$, without computing the entire system. 
By applying the algorithm to each $t_j\in T_\alpha$, we can discover which transitions of $\Sigma$ are not 1-live, 
and therefore can be removed without changing the behaviour of the net. 

The algorithm takes as input $t_j$, the set of all the agents in the system, and a set of labels $\Lambda$, 
and returns true if there is 
a firing
sequence of transitions that enable $t_j$, false otherwise.  
In addition, if one exists, the algorithm returns the sequence $\pi$ of 
transitions leading to $t_j$.
In the first call $\Lambda = \{\alpha\}$. 

The algorithm has a recursive structure. 
The first step consists in computing the minimal paths from the initial state of $\Sigma^\Lambda$ 
to $t_j$ (this is done by the function comp\_min\_paths).
If there are no minimal paths in $MG(\Sigma^\Lambda)$, 
then it returns false, since $\Sigma^\Lambda$ does not need to be further explored. 
Otherwise, it selects a minimal path $\pi$, through the function select\_path. 
Let $\lambda(\pi)$ be the set of labels of the transitions in $\pi$. 
If $\Sigma^{\lambda(\pi)} = \Sigma^\Lambda$, 
then the algorithm returns true, since we found a path that can be executed on $\Sigma$ and enables $t_j$. 
When this happens, there is no need to look for alternative paths, 
and the computation can stop and return true. 
This is not the case if a recursive call returns false, since 
the unreachability of $t_j$ may be due to a 
wrong choice of the path in one of the previous steps. 
Then, we need to check if, in previous calls, other paths could have been chosen, 
leading to different subsystems, and check if $t_j$ is reachable in them. 
If $t_j$ is not reachable from any path, 
then we can conclude that $t_j$ is not 1-live. 
\begin{algorithm}
\caption{Check if $t_j$ is 1-live}
\begin{algorithmic} \label{alg:1-live}
\STATE\textbf{procedure} check\_1liveness($t_j, \{\Sigma_i: i\in \{1,...,n\}\}, \Lambda )\in \{$\textbf{true}, \textbf{false} $\} \times \Pi$
\STATE \quad $\Pi = $comp\_min\_paths$(MG(\Sigma^\Lambda), t_j)$
\STATE \quad \textbf{if} $\Pi == \emptyset$
\STATE \quad \quad \textbf{return false}, $\emptyset$
\STATE \quad \textbf{end if}
\STATE \quad \textbf{while} $\Pi \neq \emptyset$
\STATE \quad\quad $\pi = $select\_path$(\Pi)$
\STATE \quad\quad $\Pi = \Pi \setminus \{\pi\}$
\STATE \quad\quad \textbf{if} $\Sigma^\Lambda == \Sigma^{\lambda(\pi)}$
\STATE \quad\quad\quad $\pi' = \pi$
\STATE \quad \quad\quad \textbf{return true}, $\pi$
\STATE \quad\quad \textbf{end if}
\STATE \quad\quad $r, \pi' =$ check\_1liveness($t_j, \{\Sigma_i : i\in \{1, ..., n\}\}, \lambda(\pi)$)
\STATE \quad \quad \textbf{if} $r ==$ \textbf{true}
\STATE \quad\quad \quad \textbf{return true}, $\pi'$
\STATE \quad \quad \textbf{end if}
\STATE \quad \textbf{end while}
\STATE \quad \textbf{return false}, $\emptyset$
\STATE \textbf{end procedure}
\end{algorithmic}
\end{algorithm}

By construction, for each transition $t_i$ in the sequence $\pi$ returned by Algorithm~\ref{alg:1-live}, 
the set of preconditions and the set of postconditions are the same in $\Sigma$ and in $\Sigma^{\lambda(\pi)}$.
\begin{theorem}
Algorithm~\ref{alg:1-live} is correct, i.e. for each transition $t_j\in T$, the algorithm returns true iff 
$t_j$ is 1-live in $\Sigma$.
\end{theorem}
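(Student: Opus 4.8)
The plan is to prove the two directions of the biconditional separately, with the main work being the ``only if'' direction which requires showing that the recursion in Algorithm~\ref{alg:1-live} eventually either finds a genuinely executable path or exhausts all possibilities.

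\textbf{Soundness (the ``if'' direction of the returned value, i.e. the algorithm returns true $\Rightarrow$ $t_j$ is 1-live).} First I would establish the key invariant already flagged in the remark preceding the theorem: whenever a firing sequence $\pi$ is found as a minimal path in $MG(\Sigma^\Lambda)$ and the termination test $\Sigma^\Lambda == \Sigma^{\lambda(\pi)}$ succeeds, then every transition $t_i$ occurring in $\pi$ has the same preset and postset in $\Sigma^{\lambda(\pi)}$ as in the global net $\Sigma$. This is because the places of $\Sigma$ are partitioned among the agents (by the composition construction), and $\Sigma^{\lambda(\pi)}$ contains \emph{all} agents that carry any label of $\pi$ in their alphabet; hence no agent outside $\Sigma^{\lambda(\pi)}$ contributes a pre- or post-condition to any transition of $\pi$. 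From this, a straightforward induction on the length of $\pi$ shows that the same sequence of transitions is firable from $m_0$ in $\Sigma$ and reaches a marking in which $t_j$ is enabled: at each step the enabling condition in $\Sigma$ restricted to the relevant places is exactly the one checked in $\Sigma^{\lambda(\pi)}$, and the tokens produced/consumed outside those places are irrelevant to the remainder of $\pi$ and to $t_j$. When the algorithm returns true via a recursive call, the path $\pi'$ returned is, by induction on the recursion depth, such a globally executable witness, so again $t_j$ is 1-live.

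\textbf{Completeness (the ``only if'' direction, i.e. $t_j$ is 1-live $\Rightarrow$ the algorithm returns true).} Suppose $t_j$ is 1-live in $\Sigma$, so there is a firing sequence in $\Sigma$ from $m_0$ enabling $t_j$; by removing cycles we may take it minimal, say $\sigma$, with label set $L = \lambda(\sigma)$. The crucial monotonicity fact, noted informally in the text, is that $\Sigma^{\Lambda'}$ for $\Lambda'\subseteq\Lambda$ is obtained from $\Sigma^{\Lambda}$ by \emph{dropping} agents, and dropping an agent can only remove synchronization constraints; consequently, if $t_j$ is 1-live in $\Sigma$ then it is 1-live in $\Sigma^{\Lambda}$ for every $\Lambda$ with $\alpha\in\Lambda$, so \textsf{comp\_min\_paths} never returns $\emptyset$ and the algorithm never returns false at the first test along any branch that it actually needs. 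Now I would set up a termination-and-progress argument on the parameter $\Lambda$: each recursive call is invoked with $\lambda(\pi)\subseteq\Lambda$, and the test $\Sigma^{\Lambda}==\Sigma^{\lambda(\pi)}$ fails exactly when $\lambda(\pi)$ pulls in a \emph{strictly smaller} set of agents; since the set of agents is finite, the recursion depth is bounded, and on any branch the call eventually reaches a point where $\Sigma^{\Lambda}==\Sigma^{\lambda(\pi)}$ for the chosen path $\pi$, at which point true is returned. The remaining obligation is to show that the \textsf{while} loop's exhaustive backtracking over all minimal paths guarantees that \emph{some} branch succeeds. For this I would argue by induction on the number of agents in $\Sigma^\Lambda$: if $t_j$ is 1-live in $\Sigma$, it is 1-live in $\Sigma^\Lambda$, so $MG(\Sigma^\Lambda)$ contains a minimal path $\pi_0$ to $t_j$; if already $\Sigma^{\lambda(\pi_0)}=\Sigma^\Lambda$ we are done at this level, and otherwise one shows that the witnessing global sequence $\sigma$, projected appropriately, yields a minimal path in $MG(\Sigma^\Lambda)$ whose label set induces exactly (a subsystem containing) the agents actually used by $\sigma$, so that the recursive call on that path's label set — which has strictly fewer agents or is handled by the base case — returns true by the induction hypothesis; the loop, trying all minimal paths, will reach that path or an equally good one.

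\textbf{Main obstacle.} The delicate point is the completeness argument: one must be careful that the particular minimal path selected by \textsf{select\_path} need not be the projection of the global witness $\sigma$, and that a path which ``looks'' executable in $\Sigma^\Lambda$ may fail in $\Sigma$ (exactly the phenomenon illustrated by Fig.~\ref{fig:compnot}). The resolution is that the algorithm does not commit to one path: it recurses on $\lambda(\pi)$ for \emph{each} minimal $\pi$ in turn, and the recursion on $\Sigma^{\lambda(\pi)}$ re-examines the situation with the now-larger constraint set folded in. I would therefore phrase the induction invariant as: \emph{for every $\Lambda$ with $\alpha\in\Lambda$, if $t_j$ is enabled by some firing sequence of $\Sigma^\Lambda$ that is also a firing sequence of $\Sigma$ using only agents in $\Sigma^{\lambda(\text{that sequence})}$, then the call with parameter $\Lambda$ returns true} — and check that 1-liveness in $\Sigma$ supplies such a sequence at the top level, while each recursive descent preserves the hypothesis because adding the agents of $\lambda(\pi)$ only moves us closer to the true global constraint set. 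Establishing that the backtracking genuinely covers all candidate label sets that could arise from a global witness, without missing one, is the part that needs the most care.
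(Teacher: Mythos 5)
Your soundness direction is essentially the paper's argument: the same key observation that every agent contributing a pre- or post-condition to a transition occurring in $\pi$ already belongs to $\Sigma^{\lambda(\pi)}$ (because the places of $\Sigma$ are partitioned among the agents), followed by the same induction along the firing sequence to transfer enabledness from $\Sigma^{\lambda(\pi)}$ to $\Sigma$. That half is fine and matches the paper.

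The completeness direction contains a genuine gap, and it rests on a claim that is false. You assert that each recursive call is invoked with $\lambda(\pi)\subseteq\Lambda$ and that the test $\Sigma^{\Lambda}==\Sigma^{\lambda(\pi)}$ fails exactly when $\lambda(\pi)$ pulls in a \emph{strictly smaller} set of agents, and you then run an induction in which the recursive call ``has strictly fewer agents''. This is backwards: $\pi$ is a path of $MG(\Sigma^{\Lambda})$ and may use any label of any agent already present, so $\lambda(\pi)$ is in general not contained in $\Lambda$, and the recursion typically \emph{enlarges} the agent set, since $\Sigma^{\lambda(\pi)}$ must absorb every agent sharing any label occurring along $\pi$. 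This is precisely the mechanism of Fig.~\ref{fig:compnot}: the first call for label $d$ composes only the second and third agents, and the recursive call on $\lambda(\pi)\supseteq\{b,c,d\}$ then pulls in the first agent because it shares $b$. (Since a minimal path need not use every label of $\Lambda$, the agent set can also shrink, so neither monotonicity holds; your proposed well-founded measure does not exist as stated.) Consequently your induction ``on the number of agents in $\Sigma^{\Lambda}$'' does not get off the ground, and you yourself leave open the final step that the backtracking over minimal paths covers all candidate label sets. The paper avoids all of this by proving the contrapositive with a single monotonicity observation: adding agents can only add synchronization constraints, so if $t_j$ is enabled by no firing sequence of some $\Sigma^{\Lambda}$ then it is enabled by no firing sequence of $\Sigma$; every \textbf{false} answer of Algorithm~\ref{alg:1-live} is ultimately justified by such a subsystem with an empty set of minimal paths. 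If you want to keep a direct argument, you would need to replace your measure by one that is genuinely well-founded (e.g.\ the set of agents still missing from $\Sigma^{\Lambda}$ along a branch on which the agent set only grows) and actually discharge the covering claim rather than flag it.
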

\begin{proof}
As the first step, we show that if the algorithm returns true, then $t_j$ is executable in $\Sigma$, 
and in particular the path $\pi = t_1...t_j$ returned by the algorithm is 
a firing sequence
of $\Sigma$. 
We proceed by induction, starting to show that $t_1$ is enabled in $m_0$. 
By contradiction, let us suppose that $t_1$ is not enabled in $m_0$. Then, there must be a precondition 
$p\in {^\bullet t_1}$, such that $p\not \in m_0$. By the construction, all the elements in $^\bullet t_1$ 
come from agents that have transitions labelled with $\lambda(t_1)$, and all these agents are included in 
$\Sigma^{\lambda(\pi)}$; hence, if $t_1$ is enabled in the initial state of $\Sigma^{\lambda(\pi)}$, it must be enabled 
also in $m_0$. 
Let $\pi_i = t_1...t_i$, $i<j$, be a prefix of the firing sequence $\pi$  
and $m_i$ the 
state reached in $\Sigma$ after executing $\pi_i$. We show that $t_{i+1}$ is enabled in $m_i$. 
By contradiction, let us suppose that $t_{i+1}$ is not enabled in $m_i$. Then there must be a place 
$p\not\in m_i$ and such that $p\in {^\bullet t_{i+1}}$. This place must be also in $\Sigma^{\lambda(\pi)}$, since, 
by the construction, $\Sigma^{\lambda(\pi)}$ includes all the agents with transitions labelled $\lambda(t_{i+1})$, 
and the preconditions of $t_{i+1}$ on $\Sigma$ cannot belong to any other agent.  
Let $m^{\lambda(\pi)}_0t_1m^{\lambda(\pi)}_1...t_im^{\lambda(\pi)}_i$ the sequence of states and transitions obtained 
by firing the sequence $t_1...t_i$ in $\Sigma^{\lambda(\pi)}$;
$p\in m^{\lambda(\pi)}_i$, since $t_{i+1}$ can fire after $\pi_i$ in $\Sigma^{\lambda(\pi)}$, and there must be 
an index $k\leq i$ such that $p\in m^\alpha_r$ for each $r\geq k$. If $k = 0$, then $p\in m_0$, since 
$m_0^{\lambda(\pi)} \subseteq m_0$, by the construction. If $k > 0$, then $p \in t_k^\bullet$, and $p\in m_k$. 
Since the set of preconditions and postconditions of $t_{k+1}...t_i$ is the same in $\Sigma$ and 
$\Sigma_{\lambda(\pi)}$, if $p\in m^{\lambda(\pi)}_i$ after the execution of $t_{k+1}...t_i$, then $p\in m_i$ after firing the same sequence. 

As the second step, we need to prove that if the algorithm returns false, then $t_j$ is not 1-live in 
$\Sigma$. This follows from the observation that adding agents to the system can only restrict 
the possibility of the transitions in $\Sigma^\alpha$ to occur by adding synchronizations 
constraints. Therefore, if $t_j$ is a transition in $\Sigma^\Lambda$, but there is no sequence 
in $\Sigma^\Lambda$ enabling $t_j$, a fortiori there cannot be any sequence in $\Sigma$.
\end{proof}

\begin{proposition}
Algorithm~\ref{alg:1-live} terminates after a finite number of steps.
\end{proposition}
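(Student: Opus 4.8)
The plan is to split the argument into two independent parts: (i) show that a single invocation of check\_1liveness performs only finitely many steps apart from its recursive calls, and (ii) show that the tree of recursive calls is finite. Together these yield termination of the whole computation.

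For part (i), I would fix one call with label set $\Lambda$. The net $\Sigma^\Lambda$ is a $1$-safe net system whose places form a subset of $P$, so it has at most $2^{|P|}$ reachable markings; hence $MG(\Sigma^\Lambda)$ is a finite graph that is built in finitely many steps. A minimal firing sequence never repeats a marking, so its length is bounded by the number of states of $MG(\Sigma^\Lambda)$, and there are only finitely many such sequences over the finite transition set of $\Sigma^\Lambda$; therefore comp\_min\_paths terminates and returns a \emph{finite} set $\Pi$. The \textbf{while} loop removes one element of $\Pi$ per iteration and otherwise performs only elementary operations together with a single recursive call, so it iterates at most $|\Pi|$ times. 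Thus each node of the recursion tree is finitely branching and does only finite local work.

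For part (ii), I would track the set $\mathcal{A}(\Lambda)$ of agents whose alphabet meets $\Lambda$, recalling that $\Sigma^\Lambda$ is precisely the composition of the agents in $\mathcal{A}(\Lambda)$. A recursive call is issued only when the test $\Sigma^\Lambda = \Sigma^{\lambda(\pi)}$ fails, i.e. when $\mathcal{A}(\Lambda) \neq \mathcal{A}(\lambda(\pi))$. The key claim is that in fact $\mathcal{A}(\Lambda) \subseteq \mathcal{A}(\lambda(\pi))$, so that each recursive step \emph{strictly enlarges} the set of participating agents. Granting this, every branch of the recursion tree has length at most $n$; since every node is finitely branching by (i), the recursion tree is finite (a finitely branching tree all of whose branches are finite is itself finite), and the algorithm halts.

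The step I expect to be the main obstacle is exactly this monotonicity claim $\mathcal{A}(\Lambda) \subseteq \mathcal{A}(\lambda(\pi))$. It is immediate at the top level, where $\Lambda = \{\alpha\}$ and $\alpha = \lambda(t_j) \in \lambda(\pi)$ because $\pi$ ends with a transition labelled $\alpha$; the real content is that in a general recursive step no agent already present in $\Sigma^\Lambda$ is dropped when passing to $\Sigma^{\lambda(\pi)}$. Here one uses that $\pi$ is a firing sequence of $\Sigma^\Lambda$, so every label occurring in $\pi$ labels a transition of $\Sigma^\Lambda$; the delicate point is to be sure that the label set handed down to the recursive call still witnesses membership of every agent of $\mathcal{A}(\Lambda)$. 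The cleanest way to secure this is to pass $\Lambda \cup \lambda(\pi)$ to the recursive call, which makes the chain of label sets nondecreasing and hence the chain $\mathcal{A}(\Lambda) \subseteq \mathcal{A}(\Lambda \cup \lambda(\pi)) \subseteq \cdots$ manifestly increasing and bounded by $n$, without affecting the witness path that is ultimately returned. Once this monotonicity is pinned down, the bound on the recursion depth and the concluding finiteness argument are routine.
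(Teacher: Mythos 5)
Your decomposition is the same one the paper's (one-sentence) proof gestures at: finiteness of the marking graph and of the set of minimal paths gives finite local work and finite branching, and finiteness of the number of agents is supposed to bound the recursion depth. Part (i) of your argument is fine and matches the paper. The real contribution of your write-up is that you make explicit what the paper leaves implicit in part (ii), and you are right to flag the monotonicity claim $\mathcal{A}(\Lambda)\subseteq\mathcal{A}(\lambda(\pi))$ as the crux: it holds at the top level (since $\alpha=\lambda(t_j)\in\lambda(\pi)$), but at deeper levels $\Lambda=\lambda(\pi')$ for a previously selected path $\pi'$, and a new minimal path $\pi$ in $\Sigma^{\Lambda}$ need not use every label of $\Lambda$, so an agent of $\mathcal{A}(\Lambda)$ that does not participate in $\pi$ can be dropped. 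As the algorithm is literally written, the successive label sets are therefore not guaranteed to be nondecreasing, and without that one cannot rule out a branch that alternates between two label sets forever; the paper's appeal to ``the finiteness of the number of agents'' silently assumes the monotonicity you isolate. Your repair --- passing $\Lambda\cup\lambda(\pi)$ to the recursive call (and, correspondingly, testing whether $\Sigma^{\Lambda\cup\lambda(\pi)}=\Sigma^{\Lambda}$) --- makes the chain of agent sets strictly increasing along every branch, bounds the depth by $n$, and does not affect correctness of the returned witness path, so with that modification your proof is complete and is strictly more careful than the one in the paper. It would be worth stating the modification (or an equivalent argument that no label set can recur on a branch) explicitly rather than leaving it as a remark.
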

\begin{proof}
The thesis follows from the finiteness of the number of agents in the system, and of the number of 
minimal paths.
\end{proof}

Algorithm~\ref{alg:1-live} does not guarantee that the system $\Sigma^{\lambda(\pi)}$ to check will be 
smaller than $\Sigma$, since the two systems may coincide (for example in the systems in Fig. \ref{fig:glnet} and Fig. \ref{fig:compnot}). 
However, in distributed systems in which each agent interacts with a small 
subset of other agents of the entire system, it may become a convenient technique.  
An example of such a system could be represented by a social network, where the number of users is huge, 
but each of them has a limited number of connections. 
A toy example is represented in Fig.~\ref{fig:conv}. In this case concurrency enlarges the size of 
the global transition system, while does not affect the reduced systems. 
\begin{figure}
    \centering
    \includegraphics[width = 0.8\textwidth]{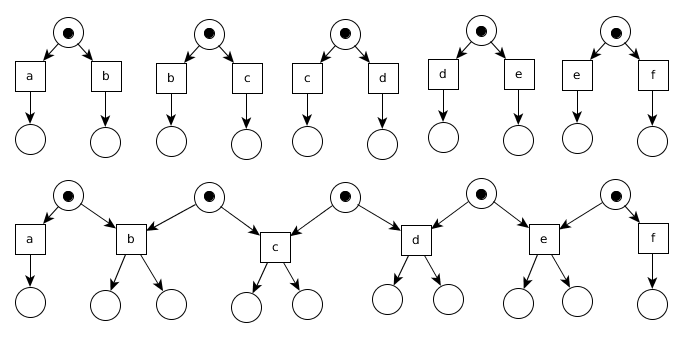}
    \caption{Example of the system where each agent interact with a small subset of other agents.}
    \label{fig:conv}
\end{figure}

\begin{figure}[ht]
    \centering
    \includegraphics[width = 0.8\textwidth]{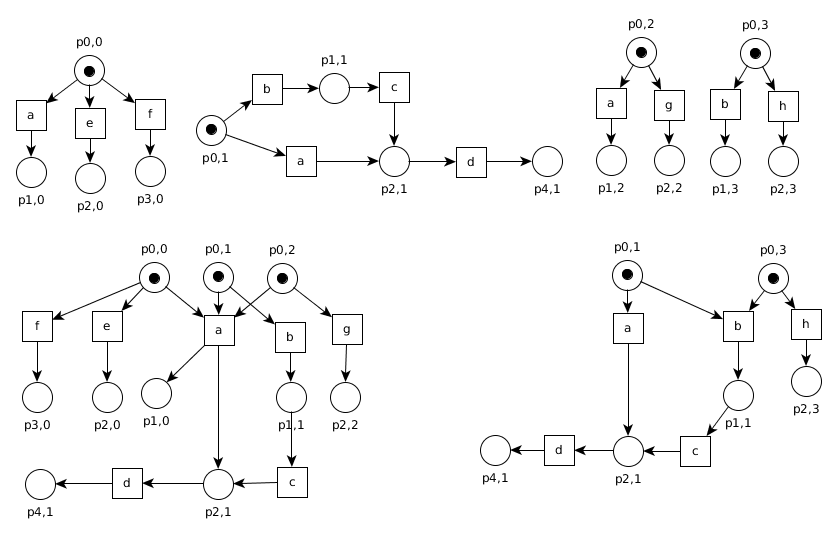}
    \caption{Net with two possible paths to reach $d$ and the alternatives compositions.}
    \label{fig:select-path}
\end{figure}
The required computation can also be reduced by choosing proper heuristics for the function 
select\_path, so that the more convenient paths are selected to be analysed first. 
A possible criterion could be to select first the paths requiring to add the minor number of new agents, 
even when they are longer and with more labels than others. 
Consider for example the system of agents in Fig.~\ref{fig:select-path}. 
The four agents are represented at the top of the figure, and two of their compositions at the bottom.
Transition $d$ belongs only to agent 1 (the second from the left), and there are two minimal 
sequences reaching it: $ad$, and $bcd$. 
Although $ad$ is shorter, the label $a$ is shared by two more agents: agent 0 and agent 2, whereas $c$ 
belongs only to agent 1, and $b$ is shared with agent $3$ only. 
The two compositions of agents represented in Fig.~\ref{fig:select-path} shows $\Sigma^{\{a, d\}}$ (on the left), 
and $\Sigma^{\{b,c,d\}}$ (on the right). It is easy to see that $\Sigma^{\{b,c,d\}}$ has less reachable states 
than $\Sigma^{\{a,d\}}$, and it is sufficient to decide the 1-liveness of the transition labelled with $d$.

Note that, for simplicity, once more we have considered only the specific 1-safe labelled systems.
If we restrict ourselves to such situation, most of the reasoning can be repeated without using
Petri nets. However, the real performance improvement is revealed when we utilize more sophisticated
synthesis algorithms based on region theory.
One can expect that in such cases the number of transitions with the same label 
appearing in a single local model would be smaller. 
Moreover, such approach gives a chance to decompose local models into sequential components 
(see~\cite{rozenberg1996elementary}) and use them instead of entire modules.

\section{Summary}
In this paper we provided a 1-safe Petri net framework to support the reasoning about multi-agent systems.
We have shown that it faithfully reflects the transition-based composition of the agents
and illustrates the usage in the case of checking 1-liveness of transitions.
Continuing this thread of research we plan to take a closer look to the ideas highlighted at the end of 
Section~\ref{sec:1-live}.
The mentioned semi-automatic decomposition of the agents' behaviour into smaller subsystems 
has a large potential not only in the case of checking 1-liveness, 
but also in preparing the environment (formal assumption subsystem) for the assume-guarantee reasoning.

Taking into account the similarity of labelled 1-safe Petri net models of AMAS described in paper with
Nested Unit Petri Nets, we would like to examine the effectiveness of proposed algorithm comparing with the 
liveness checks in the existing tools for analysing NUPNs like EVALUATOR \cite{mateescu2008specification}.

However, in the future we would like to consider systems that are k-safe. The reason is twofold,
there are synthesis methods which work fine without the restriction of 1-safeness. The resulting nets usually require less use of transition splitting, hence are smaller. Moreover, such approach is much more natural in planned by us analysis of asynchronous systems, where the synchronization is data-oriented (not action-oriented as in the approach presented in this paper).

Recently, an approach to automated synthesis of MAS 
based on satisfiability and model checking tools,
has been presented in \cite{msatl}.
After specifying the constraints and the strategic properties to be met,
the tool, exploiting monotonic theory for ATL \cite{KR20},  
looks for the model satisfying all requirements.
To reason about strategic abilities of MASs  
variants of  ATL*  and SL logic are considered \cite{belardinelli2019strategy,KR21}.
As a future work, we plan to utilize the described framework both in direct model checking using
1-safe Petri nets, and in the synthesis of MAS satisfying desired ATL formulas. 
 
\section*{Acknowledgement}
The first author was supported by the Italian MUR.
  The second author was supported by the National Centre for Research and Development, Poland (NCBR),
		and by the Luxembourg National Research Fund (FNR), under the PolLux/FNR-CORE project STV (POLLUX-VII/1/2019). 

\bibliographystyle{plain}
\bibliography{arxiv_version}

\end{document}